\tikzset{->-/.style={decoration={
  markings,
  mark=at position .5 with {\arrow{>}}},postaction={decorate}}}
\tikzset{->--/.style={decoration={
  markings,
  mark=at position .3 with {\arrow{>}}},postaction={decorate}}}
\tikzset{-->-/.style={decoration={
  markings,
  mark=at position .7 with {\arrow{>}}},postaction={decorate}}}
\newcommand{\s}{\mathcal{S}}
\newcommand{\N}{\mathcal{N}}
\newcommand{\G}{\mathcal{G}}
\newcommand{\C}{\mathrm{C}}
\newcommand{\Dom}{\mathrm{D}}
\newcommand{\Ran}{\mathrm{R}}
\newcommand{\tbarr}[2]{\scriptsize{\begin{array}{cc} #1 \\ #2 \end{array}}}
\newcommand{\tbarrr}[3]{\scriptsize{\begin{array}{cc} #1 \\ #2 \\ #3 \end{array}}}
\begin{document}
\title{Domain Range Semigroups and Finite Representations}
%
%\titlerunning{Negated Semigroups, FRP, and RA Embedding}
% If the paper title is too long for the running head, you can set
% an abbreviated paper title here
%
\author{Ja{\v s} {\v S}emrl \orcidID{0000-0001-7440-8867} \thanks{The author thanks Professor Robin Hirsch for supervision and insightful conversations about the work presented}}
\authorrunning{J. {\v S}emrl }
% First names are abbreviated in the running head.
% If there are more than two authors, 'et al.' is used.
%
\institute{University College London, Gower St, London WC1E 6BT, UK
\email{j.semrl@cs.ucl.ac.uk}}
\maketitle              % typeset the header of the contribution
\begin{abstract}
Relational semigroups with domain and range are a useful tool for modelling nondeterministic programs. We prove that the representation class of domain-range semigroups with demonic composition is not finitely axiomatisable. We extend the result for ordered domain algebras and show that any relation algebra reduct signature containing domain, range, converse, and composition, but no negation, meet, nor join has the finite representation property. That is any finite representable structure of such a signature is representable over a finite base. We survey the results in the area of the finite representation property.

\keywords{ Domain-Range Semigroups  \and Demonic Composition \and Finite Representation Property.}
\end{abstract}
\section{Introduction}
Formal reasoning about programs and their correctness is an important, yet a demonstrably difficult task and many well known approaches have been proposed. Algebraically speaking, a deterministic program is a partial function mapping from the state space to itself. Generalising this, to account for nondeterminism, we can say that a program (deterministic or nondeterministic) is a binary relation over the state space. This ability to naturally express such concepts motivates the endeavour of formalising the logic of binary relations.

A formalisation of this sort is found in Relation Algebra, obtained by extending the language of Boolean Algebra with operations specific to binary relations. This enables us to reason about the behaviour of binary relations in an abstract manner. However, these algebras are also very badly behaved, with an abundance of undecidability results, see \cite[Part V]{hirsch2002relation}. A possible way of combating this is by dropping some operations from the language, sacrificing the ability to encapsulate the behaviour of relational calculus in exchange for decidability of certain decision problems. We will formally define some of these and how to prove positive properties later in this section.

Here we examine some of these favourable properties, or lack thereof, for languages containing domain and range, and put them in the bigger context of relation algebra reduct languages. We chose this subset of languages as they were found useful in algebraically reasoning about correctness of nondeterministic programs, see Section~\ref{sec:relWork} for more details.

But first, some definitions. Let $X$ be a base set. \emph{Domain} ($\Dom$) and \emph{range} ($\Ran$) are operations, defined for some relation $R \subseteq X \times X$ as
$$\Dom(R) = \{(x,x) \mid \exists y: (x,y) \in R\} \hspace{0.5cm} \Ran(R) = \{(y,y) \mid \exists x: (x,y) \in R\}$$
and together with composition, they form the signature of domain-range semigroups. However, relational composition is not always interpreted in the same way. Two examples of interpretations include the \emph{angelic} or ordinary composition (denoted $;$) and \emph{demonic} composition (denoted $*$), defined below for $R,S \subseteq X \times X$
\begin{gather*}
    R;S = \{(x,z) \mid \exists y: (x,y) \in R \wedge (y,z) \in S\} \\
    R*S = \{(x,y) \in R;S \mid \forall z: (x,z) \in R \Rightarrow (z,z) \in \Dom(S)\}
\end{gather*}
Whilst the first definition seems pretty intuitive, the second one may appear a bit odd, even arbitrary, so let us have a closer look. The operation is motivated in the behaviour of a nondeterministic machine when the demon is in control of nondeterminism. Imagine the relations $R,S$ were programs over the state space $X$. The pair $(x,y) \in R;S$ is included in $R*S$ if and only if there is no run from $R$ to some $z$ from which $S$ aborts or loops forever, i.e. $(z,z) \notin \Dom(S)$. Should such a run exist, the demon will take the opportunity and abort the computation. For more details on this refer to \cite{hirsch2021demonic}.

Any $\{D,R,;\}$- or $\{D,R,*\}$-structure $\s$ with an underlying set $S \subseteq \wp(X \times X)$ for some base $X$ and operations interpreted relationally (as defined above) is \emph{proper}. Let $\tau$ be a signature of operations that are well defined for binary relations. The \emph{representation class} for $\tau$, denoted $R(\tau)$, is the class of all proper $\tau$-structures, closed under isomorphic copies. An isomorphism $\theta$ that maps a representable structure to a proper structure is called a \emph{representation}.

A representation is finite if the base set $X$ of the proper image is finite. If all finite members of $R(\tau)$ have finite representations, we say that the signature has the \emph{finite representation property} (FRP).

The two properties described above are of special interest to us. This is because they both guarantee the decidability of determining membership in $R(\tau)$ for finite structures, also known as the \emph{representability decision problem}. Although the properties both ensure decidability of the said decision problem, they in no way follow from each other. This provides us with two non trivial questions for each Relation Algebra reduct language that, given either is answered affirmatively, provide us with a decidability guarantee.

Here, we answer \cite[Question 4.9]{jackson2019domain} and show that $R(D,R,*)$ is not finitely axiomatisable. We do so by defining a two-player game that corresponds to a recursively enumerable axiomatisation of the representation class. Then we show that for each finite subset of this axiomatisation has a non-representable model. By compactness of first order logic, we are able to reach a contradiction under the assumption of finite axiomatisability.

Then we show that any relation algebra reduct signature containing domain, range, converse and composition, but no negation, meet, nor join has the finite representation property. This is an extension of a previous finite representation property result for ordered domain algebras \cite{hirsch2013meet}. We conclude by putting the result in a larger context of finite representation property for all reduct signatures of relation algebra. We survey the existing results and raise some open questions in the area.

\section{Motivation and Context}
\label{sec:relWork}
In this section we take a closer look at the related work and motivate the problems. We have seen that structures of relations provide us with a natural way of formally reasoning about nondeterministic programs \cite{dijkstra2012predicate}. In \cite{desharnais2009domain}, a good intuition on how to use structures with domain and range to model program control flow using semigroups with domain and range -- functional for deterministic, and relational for nondeterministic programs. This allows us to express partial correctness equationally.

However, to extend this to total correctness, we have to turn to the demon. Demonic calculus was introduced to model the behaviour of programs, should the demon be in control of making nondeterministic decisions. Recently, it has been shown we may take this to our advantage and introduce equations to model total correctness. One such approach expresses total correctness using the domain and demonic composition \cite{hirsch2020algebra} and another using ordinary composition and the bottom element of the demonic lattice \cite{hirsch2021demonic}.

These applications motivate our search for computational guarantees. As we have discussed, this includes looking for finite axiomatisability of the representation class and the finite representation property. A major negative result is shown with $R(\Dom,\Ran,;)$ and $R(\Dom, ;)$ having no finite axiomatisation \cite{hirsch2011axiomatizability}.

Both of these two signatures have the finite representation property open. However, one may add the partial ordering, converse, the identity, and the empty relation to obtain the signature of ordered domain algebras. Surprisingly, this signature has both the finite representation property, as well as a finitely axiomatisable representation class \cite{hirsch2013meet}. Another interesting result is the axiomatisation of $R(\Dom, *)$ is not only finite, but also the same as that of representable domain semigroups of partial functions \cite{hirsch2021axioms}. Furthermore, the equational theories of both $R(\Dom, \Ran, ;)$ and $R(\Dom, \Ran, *)$ are finitely axiomatisable \cite{jackson2019domain}.

Finally, it is important to note that although the finite axiomatisability of the representation class and the finite representation property both guarantee the decidability of the representation decision problem, neither is stronger or weaker than the other. We have seen an example of a signature with both properties in the ordered domain algebras, as well as the full signature of relation algebras with neither property. However, you can find signatures with finitely axiomatisable representation class but no FRP, like meet-lattice semigroups \cite{bredihin1978representations,neuzerling2016undecidability}, and semigroups with demonic refinement \cite{hirsch2020finite} with FRP, but non finitely axiomatisable representation class.

\section{Networks and Representation by Games}
In this section we outline a representation game that will help us prove the non finite axiomatisability result of $R(\Dom,\Ran,*)$. This argument is based on \cite{hirsch2002relation}, but defined for this specific signature. The proofs presented are outlines, however, they are more detailed in parts where it is necessary to show the argument can be feasibly used to show results for demonic composition. For full details of proofs, see \cite[Chapter 7]{hirsch2002relation}.

On an intuitive level, this approach entails defining a game where a player is challenged to build a representation, on a step by step basis over a predetermined number of moves. The design of our game must be such that the player challenged will have a winning strategy if and only if they can survive the game of any length.

We then, for every natural number, define a formula that corresponds to a winning strategy for a game of that length. This means that we have defined a recursively enumerable theory that axiomatises the representation class.

In later sections we define, for each length of the game, an unrepresentable structure where the player challenged has got the winning strategy. This will enable us to use the compactness of first order logic to reach a contradiction under the assumption of finite axiomatisability.

Now, we will define these concepts more formally. A network $\mathcal{N} = (N, \bot, \top)$ where $\bot, \top: N \times N \rightarrow \wp(\mathcal{S})$ and $\s$ is some $\{\Dom, \Ran, *\}$-structure. We say it is \emph{consistent} if and only if
\begin{gather*}
    \forall x,y \in N: \top(x,y) \cap \bot(x,y) = \emptyset \\
    \forall x, y \in N, \forall s,t \in \s: \Big( s \in \top(x,y) \wedge \big(s = \Dom(t) \vee s = \Ran(t)\big)\Big) \Rightarrow x=y 
\end{gather*}

Now, let us define for any $a,b \in \s$ the two networks $\N_{ref}[a,b]$ and $\N_{nref}[a,b]$ as follows
\begin{gather*}
    \N_{ref}[a,b] = (\{x\}, \{(x,x) \mapsto \{b\}\}, \{(x, x) \mapsto \{a\}\})\\
    \N_{nref}[a,b] = (\{x,y\}, \{(x,y) \mapsto \{b\}\}, \{(x,y) \mapsto \{a\}\})
\end{gather*}
And all other pairs map to $\emptyset$ for $\top,\bot$.

We also define two operations $+_\top[\N,x,y,a], +_\bot[\N,x,y,a]$ which take a network $\N = (N,\bot, \top)$, some $x,y \in N \dot\cup \{x_+\}$ and some $a \in \s$ and return 
\begin{gather*}
    +_\top[\N,x,y,a] = (N \cup \{x,y\}, \bot, \top^+)\\
    +_\bot[\N,x,y,a] = (N \cup \{x,y\}, \bot^+, \top)
\end{gather*}
where $\top^+(v,w)$ is the same as $\top(v,w)$, or $\emptyset$ (if $\top(v,w)$ is undefined), for all $v,w$, except for $x,y$ where $a$ is also added to $\top^+(x,y)$. Similarly for $\bot^+$.

A network $\N' = (N',\top', \bot')$ is said to \emph{extend} $\N = (N, \top , \bot)$, denoted $\N \subseteq \N'$ if and only $N \subseteq N'$ and for all $x,y \in N$ we have $\top(x,y) \subseteq \top'(x,y), \bot(x,y) \subseteq \bot'(x,y)$. Clearly, both $+_\top, +_\bot$ for $\N$ with any operands are extensions of $\N$. Furthermore, observe how inconsistency is inherited under extensions.

We can now define a game for a $\{D,R,*\}$-structure $\mathcal{S}$. It is played by two players $\forall, \exists$, we will call them Abelard and Eloise. The game, denoted $\Gamma_n(\mathcal{S})$, starts with the initialisation (zeroth) move and then continues for $n$ moves where $0 < n \leq \omega$. Let $k \leq n$. At $k$th move $\forall$ challenges $\exists$ to return a $\N_k$ such that $\N_0 \subseteq \N_1 \subseteq ... \subseteq \N_n$. $\forall$ wins the game if and only if $\exists$ introduces an inconsistent network.

\paragraph{Initialisation.} $\forall$ picks a pair $a \neq b \in \s$ and $\exists$ returns $\N_0$ that is an extension of $\N_{ref}[a,b]$, $\N_{nref}[a,b]$, $\N_{ref}[b,a]$ or $\N_{nref}[b,a]$.

\paragraph{Witness Move.} $\forall$ picks a pair of nodes $x,z$ in the network $\N_k$ and a pair of elements $a,b \in \s$ such that $a * b \in \top(x,z)$. $\exists$ picks a $y \in N \dot\cup \{x_+\}$ and returns $\N_{k+1} \supseteq +_\top[+_\top[\N,x,y,a],y,z,b]$, see Figure~\ref{fig:witcd} left.
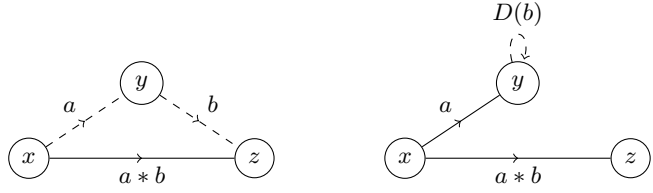
\begin{figure}[t]
    \centering
    \begin{tikzpicture}
        \node[draw,circle] (x) at (0,0) {$x$};
        \node[draw,circle] (z) at (3,0) {$z$};
        \node[draw,circle] (y) at (1.5,1) {$y$};
        
        \path (x) edge[->-] node[below]{$a*b$} (z);
        \path (x) edge[->-, dashed] node[above left]{$a$} (y);
        \path (y) edge[->-, dashed] node[above right]{$b$} (z);

        \node[draw,circle] (x) at (5,0) {$x$};
        \node[draw,circle] (z) at (8,0) {$z$};
        \node[draw,circle] (y) at (6.5,1) {$y$};
        
        \path (x) edge[->-] node[below]{$a*b$} (z);
        \path (x) edge[->-] node[above left]{$a$} (y);
        \path (y) edge[loop above, dashed] node[above]{$D(b)$} (y);

    \end{tikzpicture}
    \caption{Witness Move (left) and Composition-Domain Move (right)}
    \label{fig:witcd}
\end{figure}
\paragraph{Composition-Domain Move.} $\forall$ picks, some $x,y,z$ with $a \in \top(x,y)$ and $a * b \in \top(x,z)$ and $\exists$ must return $\N_{k+1} \supseteq +_\top[\N, y, y, \Dom(b)]$, see Figure~\ref{fig:witcd} right.

\paragraph{Composition Move.} $\forall$ picks some $x,y,z \in \N_k$ along with $a,b$ such that $a \in \top(x,y)$ and $b \in \top(y,z)$. $\exists$ has a choice between returning $\N_{k+1} \supseteq +_\top[\N, x, z, a * b]$ (Figure~\ref{fig:comp} left) and $\N_{k+1} \supseteq +_\bot[+_\top[\N,x,w,a],w,w,\Dom(b)]$ where she picks a $w \in N \dot\cup \{x_+\}$ (Figure~\ref{fig:comp} right).
\begin{figure}[t]
    \centering
    \begin{tikzpicture}
        \node[draw,circle] (x) at (0,0) {$x$};
        \node[draw,circle] (z) at (3,0) {$z$};
        \node[draw,circle] (y) at (1.5,1) {$y$};
        
        \path (x) edge[->-, dashed] node[below]{$a*b$} (z);
        \path (x) edge[->-] node[above left]{$a$} (y);
        \path (y) edge[->-] node[above right]{$b$} (z);

        \node[draw,circle] (x) at (5,0) {$x$};
        \node[draw,circle] (z) at (8,0) {$z$};
        \node[draw,circle] (y) at (6.5,0) {$y$};
        \node[draw,circle] (w) at (6.5,1) {$w$};
        
        \path (y) edge[->-] node[below]{$b$} (z);
        \path (x) edge[->-] node[below]{$a$} (y);
        \path (w) edge[loop above, dashed] node[above]{$\neg D(b)$} (w);
        \path (x) edge[->-, dashed] node[above left]{$a$} (w);

    \end{tikzpicture}
    \caption{Composition Move}
    \label{fig:comp}
\end{figure}
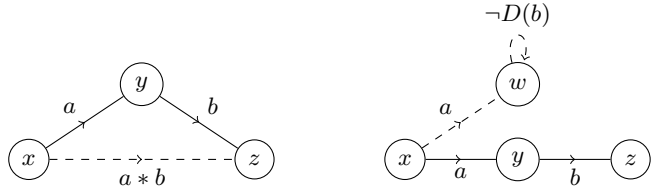
\paragraph{Domain-Range Move.} $\forall$ picks $x,y \in N_n, a \in \s$ such that $a \in \top(x,y)$ and $\exists$ must return $\N_{k+1} \supseteq +_\top[+_\top[\N,x,x,\Dom(a)],y,y, \Ran(a)]$.

\paragraph{Domain Move.} $\forall$ picks a node $x$ and an $a \in \s$ such that $\Dom(a) \in \top(x,x)$ and $\exists$ must pick a node $y \in N \dot\cup \{x_+\}$ and return $\N_{k+1} \supseteq +_\top[\N,x,y,a]$

\paragraph{Range Move.} $\forall$ picks a node $y$ and an $a \in \s$ such that $\Ran(a) \in \top(y,y)$ and $\exists$ must pick a node $x \in N \dot\cup \{x_+\}$ and return $\N_{k+1} \supseteq +_\top[\N,x,y,a]$

\begin{lemma}
\label{lem:winStrat}
A countable $\{\Dom, \Ran, *\}$-structure is representable if and only if $\exists$ has a winning strategy for $\Gamma_\omega(S)$.
\end{lemma}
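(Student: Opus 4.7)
The plan is to prove both directions of the equivalence by a standard game-based representation argument. For the $(\Rightarrow)$ direction, I would suppose $\s$ has a representation $\theta:\s \to \wp(X\times X)$ and describe a winning strategy for $\exists$ that maintains, alongside each network $\N_k = (N_k, \bot_k, \top_k)$, an injective map $\phi_k : N_k \to X$ satisfying the invariants $a \in \top_k(x,y) \Rightarrow (\phi_k(x), \phi_k(y)) \in \theta(a)$ and $a \in \bot_k(x,y) \Rightarrow (\phi_k(x), \phi_k(y)) \notin \theta(a)$. These invariants imply both consistency clauses: the first directly, and the second because $\theta(\Dom(t))$ and $\theta(\Ran(t))$ contain only reflexive pairs while $\phi_k$ is injective. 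For Initialisation, $a \neq b$ yields a pair in the symmetric difference of $\theta(a)$ and $\theta(b)$. For the Witness, Domain, Range, and Domain-Range moves she reads off witnesses directly from the semantics of $*$, $\Dom$, and $\Ran$. For the Composition-Domain move she uses the universal clause in the definition of $*$. For the Composition move she takes the first option when no ``bad'' point $w^* \in X$ exists with $(\phi_k(x), w^*) \in \theta(a)$ and $w^* \notin \theta(\Dom(b))$, and the second option otherwise, using a fresh network node mapped to such a $w^*$.

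For the $(\Leftarrow)$ direction, I would assume $\exists$ has a winning strategy $\sigma$ and play the game with $\forall$ following a fair dovetailing scheduler: since $\s$ is countable and each finite network admits only countably many pending challenges, every challenge that can ever be raised is eventually raised. Let $\N_\omega = \bigcup_k \N_k$, set $X := N_\omega$, and define $\theta(a) = \{(x,y) \in X \times X : a \in \top_\omega(x,y)\}$. It then remains to verify that $\theta$ is a representation: injectivity on $\s$ follows from the Initialisation move; the characterisations of $\theta(\Dom(a))$ and $\theta(\Ran(a))$ from the Domain-Range, Domain, and Range moves together with consistency; and the inclusion $\theta(a*b) \subseteq \theta(a) * \theta(b)$ from the Witness and Composition-Domain moves, which between them produce a composition witness and enforce the universal clause.

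The main obstacle is the reverse inclusion $\theta(a) * \theta(b) \subseteq \theta(a*b)$. Given $(x,z) \in \theta(a) * \theta(b)$, there is a composition witness $y$ with $a \in \top_\omega(x,y)$ and $b \in \top_\omega(y,z)$, and moreover every $w \in X$ with $a \in \top_\omega(x,w)$ satisfies $\Dom(b) \in \top_\omega(w,w)$. By fairness, the Composition move is eventually posed on $(x,y,z)$ with labels $a,b$, and $\sigma$ produces either $a*b \in \top_\omega(x,z)$, which is the desired conclusion, or a node $w$ with $a \in \top_\omega(x,w)$ and $\Dom(b) \in \bot_\omega(w,w)$. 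In the latter case the universal clause applied at $w$ would also force $\Dom(b) \in \top_\omega(w,w)$, contradicting consistency of $\N_\omega$. Hence only the first outcome is possible and $(x,z) \in \theta(a*b)$ as required.
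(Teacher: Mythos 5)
Your overall strategy is the same as the paper's: in the forward direction $\exists$ plays by pulling witnesses back from a representation while maintaining an embedding of the network into the base, and in the backward direction $\forall$ schedules moves fairly, the limit network is read off as a representation, and the Composition and Composition-Domain moves are used to verify that $*$ is represented correctly. Your treatment of the reverse inclusion $\theta(a)*\theta(b)\subseteq\theta(a*b)$ is a genuinely useful elaboration of a step the paper only gestures at.

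There is, however, one genuine gap in your backward direction: you build the representation from a \emph{single} play of the game, setting $X:=N_\omega$, and claim that ``injectivity on $\s$ follows from the Initialisation move.'' It does not. The Initialisation move separates only the one pair $a\neq b$ that $\forall$ names at move zero; nothing in the subsequent moves (which only enforce closure and saturation of the $\top$ labels) forces two other distinct elements $c\neq d$ to receive distinct label sets in that limit network, so the map $\theta$ defined from a single play need not be faithful. The paper repairs exactly this by running one $\omega$-length play for \emph{each} pair $a\neq b$ and taking the disjoint union $\dot\bigcup_{a\neq b}\N_\omega[a\neq b]$ as the representing structure; faithfulness then holds because each pair is separated in its own component, and the correctness of $\Dom$, $\Ran$, $*$ is preserved under disjoint unions. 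You need this step (or an equivalent device). A second, minor, slip: in the forward direction's Composition move you say $\exists$ uses ``a fresh network node mapped to such a $w^*$''; if $w^*$ is already the image of an existing node she must reuse that node rather than add a fresh one, or the injectivity of $\phi_k$ (which your consistency argument relies on) is lost. The game explicitly permits choosing $w\in N\dot\cup\{x_+\}$, so this is easily stated correctly.
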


\begin{proof}
If the structure is representable, $\exists$ can play the game by mapping the responses from the representation. Conversely, if $\exists$ has a winning strategy for $\Gamma_\omega(\s)$, she must also have the winning strategy for any length of the game where $\forall$ schedules moves in the way that eventually every move will be called and the $\top$ label of the network will in the limit be closed under composition, domain-range moves and saturated under witness, domain and range moves. Since the structure is countable, $\forall$ can schedule moves in this manner. Take the limit network, call it $\N_\omega[a \neq b]$, after such a play with the initialisation pair $a \neq b$. Observe how due to saturation and closure, the $\top$ outlines a mapping from $\s$ to $N \times N$ that represents $\Dom, \Ran, *$ correctly and ensures that $a,b$ map to different relations. Thus a disjoint union $\dot\bigcup_{a \neq b} \N_\omega[a \neq b]$ is a representation of $\s$.
\qed\end{proof}

\begin{lemma}
\label{lem:sigma}
For every $n<\omega$, there exists a first order formula $\sigma_n$ such that $\exists$ has a winning strategy for $\Gamma_n(\s)$ if and only if $\s \models \sigma_n$. Furthermore, the first order theory $\Sigma = \{\sigma_i \mid i < \omega \}$ axiomatises $R(\Dom, \Ran, *)$.
\end{lemma}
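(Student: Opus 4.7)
The plan is to build $\sigma_n$ by recursion on the depth of the game tree, and then derive both directions of the axiomatisation from Lemma~\ref{lem:winStrat}. After $k$ moves the network $\N_k$ has at most $O(k)$ nodes and its $\top,\bot$ labels are finite lists of atomic statements of the form ``element-variable $u$ equals a fixed $\{\Dom,\Ran,*\}$-term'' in the element-variables committed so far, so the consistency predicate translates directly into a first-order formula in these variables. I would then read $\sigma_n$ off the game tree: each of Abelard's moves contributes a block of universal quantifiers (over the elements he picks together with a finite disjunction over which existing node pair he names), and each of Eloise's moves a matching existential block (over her element choices and the finite decision of whether to reuse a node or introduce a fresh one). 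Initialisation contributes the leading prefix $\forall a, b\colon a \neq b \to \bigvee_{i=1}^{4}\exists\dots$ for the four canonical starting networks, and the innermost kernel is the consistency formula for $\N_n$. Since at depth $n$ only finitely many nodes and element-variables ever appear, $\sigma_n$ is a bona fide finite first-order sentence, and an easy induction on $n$ -- treating winning strategies as Skolem functions for the existential blocks -- gives $\s \models \sigma_n$ iff $\exists$ has a winning strategy for $\Gamma_n(\s)$.

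For soundness of $\Sigma$, note that the forward half of Lemma~\ref{lem:winStrat} uses no countability: if $\s$ is representable then $\exists$ wins $\Gamma_\omega(\s)$ by reading responses off the representation, hence wins every $\Gamma_n$, so $\s \models \Sigma$. For completeness I would fix $\s \models \Sigma$ and pass to an $\omega$-saturated elementary extension $\s^* \succeq \s$. Since $\s^* \models \Sigma$, Eloise has a winning strategy for each finite $\Gamma_n(\s^*)$. At any finite position $p$ reached in an $\omega$-length play -- describable by finitely many parameters $\bar c \subset \s^*$ coding the current network -- the partial type $\{\phi_n(\bar c) \colon n<\omega\}$, where $\phi_n(\bar c)$ is the natural first-order formula ``Eloise wins $n$ more moves from $p$'' obtained by applying the $\sigma_n$-style construction to the sub-game starting at $p$, is finitely satisfiable and so realised by saturation. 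Iterating, Eloise patches her finite strategies into a winning strategy for $\Gamma_\omega(\s^*)$. Applying Lemma~\ref{lem:winStrat} to a countable subalgebra of $\s^*$ swept out by a single play (or via its straightforward generalisation to saturated structures) then yields a representation of $\s^*$, from which $\s$ inherits one since $\s \subseteq \s^*$ and $R(\Dom,\Ran,*)$ is closed under subalgebras.

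The main obstacle I expect is the saturation step: one needs to commit to a precise first-order encoding of ``Eloise wins $n$ more moves from position $p$'' in the parameters describing $p$, and then verify that the witnesses realised at successive moves cohere with the maintained network and with Eloise's subsequent obligations. Once that encoding is in place, the construction of $\sigma_n$ and the soundness direction are essentially bookkeeping on top of the translation of networks into quantifier-free formulas.
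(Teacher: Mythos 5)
Your construction of $\sigma_n$ is essentially the paper's: a recursion on the number of remaining moves, with Abelard's choices contributing universal quantifiers over elements together with finite conjunctions over node pairs, Eloise's choices contributing finite disjunctions, and the consistency condition of the final network as the kernel; the soundness direction (representable $\Rightarrow$ models $\Sigma$) is also identical. Two remarks on the details. First, in this game Eloise never selects \emph{elements} of $\s$ --- her only choices are which node to use (from the finitely many existing ones plus one fresh node) and, in the composition move, which of two branches to take --- so her ``existential blocks'' are in fact finite disjunctions, and the paper's $\phi_{n+1}$ contains no existential quantifiers at all. This matters for your completeness argument: because Eloise's option set at each position is finite and $\phi_{n+1}$ implies $\phi_n$, a pigeonhole argument already lets her maintain ``$\phi_n$ holds for all $n$'' throughout $\Gamma_\omega(\s)$, so the $\omega$-saturated elementary extension and the realisation-of-types step you flag as the main obstacle are not needed; the paper instead handles countable $\s$ directly and disposes of the uncountable case by pseudoelementarity of the representation class plus L\"owenheim--Skolem. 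Second, your final step --- applying Lemma~\ref{lem:winStrat} to ``a countable subalgebra of $\s^*$ swept out by a single play'' --- does not quite close the argument for uncountable $\s$, since such a subalgebra need not contain $\s$; this is the same point the paper glosses over, and it is where one genuinely needs a closure property of $R(\Dom,\Ran,*)$ (closure under elementary equivalence / directed unions of representable subalgebras) rather than just closure under subalgebras. You should also make explicit the paper's ``conservative play'' device: the formula $\phi_n$ only certifies survival when Eloise returns the minimal required extension, and one needs the observation that inconsistency is inherited under extension to conclude that this suffices against arbitrary plays.
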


\begin{proof}
Let us define a \emph{variable network} in a slightly different manner with the mappings $\top, \bot: N \times N \rightarrow \wp(\mathrm{Vars})$. A valuation $v: \mathrm{Vars} \rightarrow \s$ defines a conventional network $v(\N)$. This allows us to define a formula $\phi_n(\N)$ in a way that, together with a valuation $v: \s \rightarrow \mathrm{Vars}$, $\exists$ can survive the \emph{conservative} play of the game for $n$ more moves, starting from $v(\N)$. By conservative, we mean that $\exists$ plays the network requested without proper extensions.

In the base case, observe how $v(\N)$ only needs to be consistent and thus
$$
    \phi_0(\N) = \bigwedge_{\tbarr{x \neq y \in N}{s \in \top(x,y)}} \neg\exists t: s = \Dom(t) \vee s = \Ran(t) \;\;\wedge \bigwedge_{\tbarrr{x, y \in N}{s \in \top(x,y)}{t \in \bot(x,y)}} s \neq t
$$
In the induction case, if $\phi_n[\N]$ signifies that $\exists$ can survive for $n$ more moves, simply define $\phi_{n+1}$ as
{\allowdisplaybreaks \begin{align*}
    \phi_{n+1}(\N) = & \bigwedge_{\tbarr{x,z \in N}{s \in \top(x,z)}} \!\forall t,u: s = t * u \Rightarrow \bigvee_{y \in N \dot\cup \{x_+\}} \phi_n(+_\top[+_\top[\N, x,y,a],y,z,b])\\
    &\wedge \bigwedge_{\tbarr{x,y,z \in N}{t \in \top(x,y), u \in \top(y,z)}} \forall s: s = t * u \rightarrow \Bigg( \phi_n(+_\top[\N, x,z, s] \\
   &  \hspace{0.5cm}\vee\;  \forall v: v = \Dom(t) \Rightarrow \bigvee_{w \in N\dot\cup\{x_+\}}\phi_n(+_\bot[+_\top[\N,x,w,t],w,w,v]) \Bigg)\\
   &\wedge \bigwedge_{\tbarrr{x,y,z \in N}{t \in \top(x,y)}{s \in \top(x,z)}} \forall u,v: (s = t * u \wedge v = \Dom(u)) \Rightarrow \phi_n(+_\top[\N,y,y,v])\\
   &\wedge \bigwedge_{\tbarr{x,y \in N}{s \in \top(x,y)}} \forall t,u: \Big(t = \Dom(s) \wedge u = \Ran(s)) \Rightarrow\\
   & \hspace{3.5cm}\phi_n(+_\top [+_\top [\N,x,x,t],y,y,u]\Big)\\
   & \wedge \bigwedge_{\tbarr{x \in N}{s \in \top(x,x)}} \forall t: \Dom(t) = s \Rightarrow \bigvee_{y \in N\dot\cup\{x_+\}}\phi_n(+_\top[\N,x,y,t])\\
   & \wedge \bigwedge_{\tbarr{y \in N}{s \in \top(y,y)}} \forall t: \Ran(t) = s \Rightarrow \bigvee_{x \in N\dot\cup\{x_+\}} \phi_n(+_\top[\N,x,y,t])
\end{align*}}
Thus $\exists$ can win a conservative game $\Gamma_n(\s)$ if and only if $\s \models \sigma_n$ where
\begin{align*}
\sigma_n = \forall s,t: s \neq t \Rightarrow \bigg(&\phi_n(\N_{ref}[a,b]) \vee \phi_n(\N_{nref}[a,b])\\ & \vee \phi_n(\N_{ref}[b,a]) \vee \phi_n(\N_{nref}[b,a])\bigg)
\end{align*}
Since inconsistencies in networks are inherited in extensions, it is true that for countable structures if $\exists$ has a winning strategy for conservative plays of $\Gamma_n(\s)$, she will also have a winning strategy for any play of $\Gamma_n(\s)$. Furthermore, as inconsistency is inherited in extensions, if $\s \models \Sigma$, $\exists$ has a winning strategy for $\Gamma_\omega(\s)$. Thus for all countable $\s$, $\s \in R(\Dom, \Ran, *)$ if and only if $\s \models \Sigma$. As the representation class is pseudoelementary, it is closed under elementary equivalence, and by L{\"o}wenheim-Skolem Theorem, we conclude $\s \models \Sigma$ is both sufficient and necessary for membership, even for uncountable structures.  
\qed\end{proof}

\section{Demonic Refinement}
\label{sec:demrf}
Before we move on to defining structures used to prove non finite axiomatisability, we will quickly have a look at the demonic lattice. We discuss in Section~\ref{sec:relWork} that the demonic lattice has found use in algebraically modelling total correctness. However, in this section, it will help us show that the structures we will use in the argument are in fact non-representable.

We do so by defining demonic refinement, the partial ordering predicate arising from the demonic lattice. Furthermore, we observe that even though the predicate is not in the signature, some pairs of elements of a representable $\{\Dom,\Ran,*\}$-structure will always be represented as demonic refinement pairs.

Now assume that a $\{\Dom, \Ran, *\}$-structure has a cycle of elements where each element is a demonic refinement of its successor. As the predicate is a partial order, it means by antisymmetry and transitivity that these distinct elements will map the same binary relation in any representation and thus the structure is not representable.

Now let us define demonic refinement for $R,S \subseteq X \times X$ as
$$R \sqsubseteq S \Longleftrightarrow (\Dom(S) \subseteq \Dom(R) \wedge \Dom(S);R \subseteq S)$$

This is motivated, again, with the demon in control of nondeterminism. Imagine $R,S$ were programs over the state space $X$. If the demon is given the choice to run $R$ or $S$, he will always run $S$. This is because when we are outside the domain of $S$, running $S$ rather than $R$ will result abort and when in the domain of $S$ it will maximise the odds of reaching an erroneous state.

Now we recursively define a predicate $\preceq$ using infinitary $\{\Dom,\Ran,*\}$-formula such that for every structure $\s$ with a representation $\theta$ we will have $\forall s,t \in \s: s \preceq t \Rightarrow s^\theta \sqsubseteq t^\theta$. We take advantage of the fact that sometimes non domain elements may compose to a domain element, and define $\preceq_1$. Then we inductively close the predicate under monotonicity and transitivity. More formally, we say that
$$s \preceq_1 t \Longleftrightarrow \exists u,v: \Dom(u * v) = u * v \wedge s = \Ran(u * \Dom(v)) \wedge t = s * v * u$$
$$s \preceq_{n+1} t \Longleftrightarrow \left(\begin{array}{rl}
     & \exists s',t',u,v: s' \preceq_n t' \wedge s = u * s' * v \wedge t = u * t' * v\\
     \vee & \exists v: s \preceq_n v \wedge v \preceq_{n} t  
\end{array}\right)$$
and $\preceq = \bigcup_{n<\omega} \preceq_n$

\begin{lemma}
For any $s,t \in \s$, if $s \preceq t$, it is true that for any representation $\theta$ we have $s^\theta \sqsubseteq t^\theta$. \end{lemma}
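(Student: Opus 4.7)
The plan is to proceed by induction on $n$ such that $s \preceq_n t$, since $s \preceq t$ precisely when $s \preceq_n t$ for some finite $n$.

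For the base case, assume $s \preceq_1 t$, so there exist $u, v \in \s$ with $\Dom(u*v) = u*v$, $s = \Ran(u * \Dom(v))$, and $t = s * v * u$. Fix any representation $\theta$ and abbreviate $U = u^\theta$, $V = v^\theta$. The hypothesis $\Dom(u*v) = u*v$ translates to the relational statement that $U * V$ is contained in the identity; unpacking the definition of $*$, this means that whenever $(x,y) \in U;V$ and every $U$-image of $x$ lies in $\Dom(V)$, we must have $x = y$. From this I would compute explicitly both $s^\theta = \Ran(U * \Dom(V))$ and $t^\theta = s^\theta * V * U$, and then verify the two defining conditions $\Dom(t^\theta) \subseteq \Dom(s^\theta)$ and $\Dom(t^\theta); s^\theta \subseteq t^\theta$ by a direct, elementwise argument that uses the forced equality $x = y$ to see that the relevant restriction of $V; U$ sends $s^\theta$-points back into themselves.

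For the induction step, suppose the claim holds for $\preceq_n$ and consider $s \preceq_{n+1} t$. In the transitivity subcase there is an intermediate $w$ with $s \preceq_n w$ and $w \preceq_n t$; the inductive hypothesis gives $s^\theta \sqsubseteq w^\theta \sqsubseteq t^\theta$, and the conclusion follows from transitivity of $\sqsubseteq$, which is a short check from its definition. In the monotonicity subcase we have $s = u * s' * v$, $t = u * t' * v$, and $s' \preceq_n t'$, so by induction $(s')^\theta \sqsubseteq (t')^\theta$, and the required conclusion reduces to the fact that demonic composition is monotone in both arguments with respect to $\sqsubseteq$, a standard property of the demonic lattice that I would verify directly from the definitions of $*$ and $\sqsubseteq$.

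The main obstacle is the base case: $\preceq_1$ is phrased as an existential over non-obvious composite terms, and checking demonic refinement demands careful bookkeeping of which tuples survive the universal side condition hidden in the demonic product once the domain-element hypothesis $\Dom(u*v) = u*v$ is invoked. The inductive step, by contrast, should reduce cleanly to standard local monotonicity and transitivity lemmas for $\sqsubseteq$ that are independent of the particular terms witnessing $\preceq$.
\qed
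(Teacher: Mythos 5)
Your overall architecture matches the paper's exactly: induction on $n$, with the inductive step discharged by transitivity of $\sqsubseteq$ and its left/right monotonicity for $*$ (the paper cites these as known from the demonic-lattice literature rather than reproving them, but your plan to verify them from the definitions is fine). You have also correctly identified the two conditions to check and the crucial mechanism in the base case, namely that $\Dom(u*v)=u*v$ forces the two endpoints of any realised $u;v$ path to coincide.

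The issue is that your base case is a promise rather than a proof, and the base case \emph{is} the lemma. What the paper actually does there is a specific witness chase that your sketch does not reproduce: for $\Dom(t^\theta)\subseteq\Dom(s^\theta)$ it suffices that $s$ is a range element (a partial identity) and $t=s*v*u$, so the domain of $t^\theta$ sits inside $s^\theta$. For the second condition, take $(x,x)\in\Dom(t^\theta)\cap s^\theta$; from $(x,x)\in s^\theta=\Ran(u*\Dom(v))^\theta$ extract $y$ with $(y,x)\in(u*\Dom(v))^\theta$, and from $(x,x)\in\Dom(t^\theta)$ extract $z$ with $(x,z)\in v^\theta$. The point you gesture at but do not execute is that $(y,y)\in\Dom(u*\Dom(v))^\theta$ supplies exactly the universal side condition needed to upgrade $(y,z)\in(u;v)^\theta$ to $(y,z)\in(u*v)^\theta$, whence $y=z$ because $u*v$ is a domain element; then $(x,z)\in(s*v)^\theta$ and $(z,x)=(y,x)\in u^\theta$ reassemble to $(x,x)\in(s*v*u)^\theta=t^\theta$, with $(x,x)\in\Dom(t^\theta)$ guaranteeing the demonic side conditions of the final composition. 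Each ``upgrade'' from angelic to demonic composition needs its own justification, and these are precisely the steps you defer to ``careful bookkeeping.'' Until they are written out, the proof is incomplete, though nothing in your plan would fail if you carried it through.
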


\begin{proof}
We show this by induction over $n$.

In the base case, we see that there exists a $u,v$ such that $u * v = \Dom(u * v)$ and $s = \Ran(u * \Dom(v))$ and $t = s * v * u$. First see how if $(x,x) \in t^\theta$, there must exist a witness for $s * v * u$ and since $s$ is a range element, it must hold that $(x,x) \in s^\theta$. Since $\Dom(s^\theta) = s^\theta$, we have
$\Dom(t^\theta) \subseteq \Dom(s^\theta)$. Furthermore, assume that $(x,x) \in \Dom(t^\theta)$ and $(x,x) \in s^\theta$. See how there must exist a $y$ such that $(y,x) \in (u * \Dom(v))^\theta$. There must also exist a $z$ such that $(x,z) \in v^\theta$. Since $(y,y) \in \Dom(u * \Dom(v))^\theta$, we can see that $(y,z) \in (u * v)^\theta$ and since $u * v$ is a domain element, $y = z$. And because $(x,x) \in \Dom(t)^\theta$ and because $(x,z) \in (s * v)^\theta$ and $(z,x) \in u^\theta$, we conclude $(x,x) \in (s * v * u)^\theta = t^\theta$.

The induction case follows from the fact that $\sqsubseteq$ is transitive as well as left and right monotone for $*$ as discussed in \cite{hirsch2020finite}.
\qed\end{proof}
The use of refinement cycles may seem similar to \cite{hirsch2011axiomatizability} where the predicate $\triangleleft$ is defined as the monotone, transitive closure of $\Dom(s) ; \Dom(t) \triangleleft \Dom(t)$ to signify ordinary inclusion ($\leq$) for the angelic signature. However, for the demonic signature, $\triangleleft$ can be simply described as $\Dom(s) * t \triangleleft t$ as the following axiom is sound
\begin{gather*}%\label{en:funax}
\forall s,t: \Dom(s * \Dom(t)) * s = s * \Dom(t)\end{gather*}
Thus, $\triangleleft$ does not show useful when trying to show $R(\Dom, \Ran, *)$ is not finitely axiomatisable, as avoiding cycles of $\triangleleft$ can be described in a single axiom. 

\section{$R(\Dom,\Ran,*)$ is Not Finitely Axiomatisable}
\label{sec:nfa}
We can now define the non representable structures for every $n<\omega$ for which $\exists$ will have a winning strategy in $\Gamma_n$. First we use the demonic refinement predicate, defined in Section \ref{sec:demrf}, to show these are not representable as they include a refinement cycle. Then we show by induction that $\exists$ will have a winning strategy for $n$ moves in the representation game. Using the compactness trick, we show that the representation class is not finitely axiomatisable.

For every $n < \omega$, let $N=2n+1$. Define a $\{\Dom, \Ran, *\}$-structure $\s_n$, with the following underlying set
$$\{0,d,r\} \cup \{m_i,\varepsilon_i, a_i,b_i,c_i,d_i,ac_i,acd_i,cdb_i, db_i, ab_i \mid 0 \leq i < N \}$$
$0,d,r,m_i, \varepsilon_i$ are the domain-range elements, idempotent with respect to composition, and disjoint, i.e. composition of two distinct domain-range elements evaluates to 0. We now examine domain-range elements, see visualisation in Figure~\ref{fig:sn}. For all $i < N$, we have
\begin{gather*}
    d = \Dom(a_i) = \Dom(ac_i) = \Dom(acd_i) = \Dom(ab_i) \\
    m_i = \Dom(c_i) = \Dom(b_i) = \Dom(cdb_i) = \Ran(a_i) = \Ran(d_i) = \Ran(cd_i)\\
    \varepsilon_i = \Dom(d_i) = \Dom(db_i) = \Ran(c_i) = \Ran(ac_i)\\
    r = \Ran(ab_i) = \Ran(cdb_i) = \Ran(db_i) = \Ran(b_i)
\end{gather*}
The reader may find it helpful to pay close attention to Figure~\ref{fig:sn} while we define the compositions. First, we say that
$$ d_i* c_i = \varepsilon_i \hspace{1cm} c_i * d_i = cd_i \hspace{1cm} cd_i * cd_i = cd_i $$
for every $i<N$. Furthermore, some elements will result in a composition with an index increasing by one, namely
$$a_i * cdb_i = ab_{i+1} \hspace{0.5cm} acd_i * cdb_i = ab_{i+1} \hspace{0.5cm} ac_i * db_i = ab_{i+1} \hspace{0.5cm}  acd_i * b_i = ab_{i+1} $$
for $i<N$ where $+$ denotes addition modulo $N$. Composition results below are defined more naturally
$$
\begin{array}{ccccc}
    cd_i * c_i = c_i  \hspace{0.2cm}&\hspace{0.2cm} d_i * cd_i = d_i \hspace{0.2cm}&\hspace{0.2cm}  a_i * b_i = ab_i \hspace{0.2cm}&\hspace{0.2cm}  a_i * c_i = ac_i \hspace{0.2cm}&\hspace{0.2cm}  a_i * cd_i = acd_i \\
    c_i * db_i = cdb_i & d_i* b_i = db_i & ac_i * d_i = acd_i & acd_i * c_i  = ac_i & cd_i * b_i = cdb_i
\end{array}
$$
 All other compositions are either the mandatory domain-range compositions or they evaluate to $0$.
 
The following two Lemmas will now show that although $\s_n$ is not representable, $\exists$ will be able to maintain consistency in the network for $n$ moves.

\begin{figure}
    \centering
    \begin{tikzpicture}
    \coordinate [draw, circle](d1) at (0,0);
    \coordinate [draw, circle](r) at (8,0);
    \coordinate [draw, circle](ml) at (4,2);
    \coordinate [draw, circle](e) at (4,4);
    \node at(-0.3,4.7){$0 \leq i < N$};
    \draw[dashed](-1.2,0.6) rectangle (9,5);
    
    \path (d1)edge[loop left] node[above]{$d$}(d1);
    \path (r)edge[loop right] node[above]{$r$}(r);
    \path (ml)edge[loop below] node[below]{$m_i, cd_i$}(ml);
    \path (e)edge[loop above] node[above]{$\varepsilon_i$}(e);
    
    \path(d1)edge[->-]node[below]{$0 \leq j < n: ab_j$}(r);
    \path(d1)edge[->-]node[left, pos=0.7]{$a_i, acd_i\;\;$}(ml);
    \path(d1)edge[->-, bend left = 30]node[left]{$ac_i$}(e);
    \path(ml)edge[->-, bend left = 30]node[left]{$c_i$}(e);
    \path(e)edge[->-, bend left = 30]node[right]{$d_i$}(ml);
    \path(ml)edge[->-]node[right, pos=0.3]{$\;b_i, cdb_i$}(r);
    \path(e)edge[->-, bend left = 30]node[right]{$db_i$}(r);
    \end{tikzpicture}
    \caption{Visualisation of $\s_n$}
    \label{fig:sn}
\end{figure}
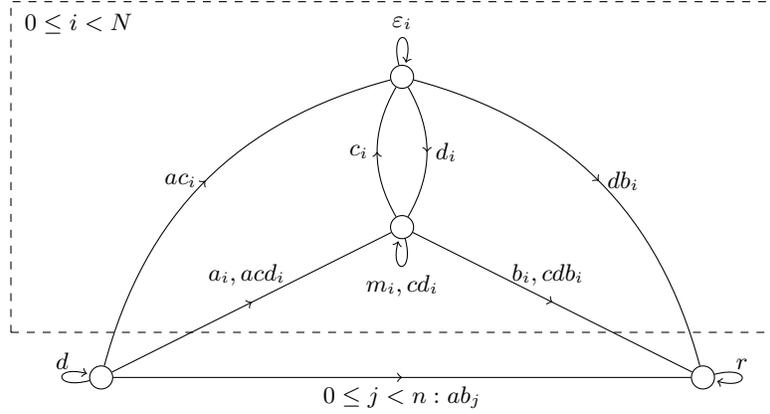

\begin{lemma}
\label{lem:nonrep}
$\s_n$ is not $\{\Dom, \Ran, *\}$-representable.
\end{lemma}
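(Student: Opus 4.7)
The plan is to exhibit a refinement cycle under the predicate $\preceq$ from Section~\ref{sec:demrf} through the $N$ distinct elements $ab_0, ab_1, \ldots, ab_{N-1}$. Concretely I aim to establish $ab_i \preceq ab_{i+1}$ for every $i$, where throughout the argument indices on $a$, $b$, $c$, $d$, $ab$, etc.\ are taken modulo $N$. Once that is in hand, the preceding lemma forces $ab_i^\theta \sqsubseteq ab_{i+1}^\theta$ under any putative representation $\theta$; transitivity of $\sqsubseteq$ then yields $ab_i^\theta \sqsubseteq ab_j^\theta$ for all $i, j$, and its antisymmetry collapses every $ab_i^\theta$ to the same binary relation, contradicting injectivity of $\theta$ since the $ab_i$ are distinct elements of $\s_n$.

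The core computation is a single instance $ab_i \preceq_2 ab_{i+1}$. First I would obtain the base-case refinement $m_i \preceq_1 cd_i$ between range elements, using witnesses $u = d_i$ and $v = c_i$: the product $d_i * c_i = \varepsilon_i$ is a domain element as required by the definition of $\preceq_1$; the left-hand side then evaluates as $\Ran(d_i * \Dom(c_i)) = \Ran(d_i * m_i) = \Ran(d_i) = m_i$, and the right-hand side evaluates as $m_i * c_i * d_i = c_i * d_i = cd_i$, matching $s$ and $t$ in the definition.

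I would then lift this via the monotonicity clause of $\preceq_2$ with outer witnesses $u = a_i$ and $v = b_i$. On the $m_i$ side, $a_i * m_i * b_i$ collapses to $ab_i$ because $m_i = \Ran(a_i) = \Dom(b_i)$. On the $cd_i$ side, the listed products $a_i * cd_i = acd_i$ and $acd_i * b_i = ab_{i+1}$ deliver the crucial index shift, giving $ab_i \preceq_2 ab_{i+1}$. Note that I only need a single-step refinement between consecutive indices; all subsequent iteration happens on the representation side via the partial-order properties of $\sqsubseteq$, so no subtle question about whether the internal relation $\preceq$ is transitive across different levels $\preceq_n$ ever arises in the argument.

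The main obstacle will be spotting the right witnesses rather than carrying out the verifications. The composition table of $\s_n$ offers essentially one non-trivial product landing on a domain element, namely $d_i * c_i = \varepsilon_i$, so the base-case refinement is effectively forced; the outer pair $(a_i, b_i)$ is the unique padding that inflates $m_i$ and $cd_i$ back up to the $ab$-layer while simultaneously triggering the defining increment $acd_i * b_i = ab_{i+1}$ that closes the cycle after $N$ steps. Everything else reduces to mechanical lookup in the displayed composition rules together with the semigroup identities $x * \Ran(x) = x$ and $\Dom(x) * x = x$.
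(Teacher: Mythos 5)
Your proposal is correct and is essentially the paper's own argument: the paper likewise observes $m_i \preceq c_i * d_i$ and lifts it to $ab_i = a_i * m_i * b_i \preceq a_i * c_i * d_i * b_i = ab_{i+1}$, then derives the collapse of all $ab_i^\theta$ from antisymmetry of $\sqsubseteq$, contradicting injectivity of $\theta$; you merely spell out the witnesses for $\preceq_1$ and $\preceq_2$ explicitly. The only (cosmetic) difference is that the paper closes the cycle using transitivity of $\preceq$ inside $\s_n$ before passing to $\theta$, whereas you pass each consecutive step through $\theta$ first and use transitivity of $\sqsubseteq$ on the representation side.
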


\begin{proof}
Observe how $m_i \preceq c_i * d_i$ and thus $ab_i = a_i * m_i * b_i \preceq a_i * c_i * d_i * b_i = ab_{i+1}$ for all $i < N$. This means by transitivity of $\preceq$ that for all $i,j < N$ we have $ab_i \preceq ab_j$. Now assume that there existed a representation $\theta$. We would have $ab_i^\theta \sqsubseteq ab_j^\theta, ab_j^\theta \sqsubseteq ab_i^\theta $, even where $i \neq j$. Since $\sqsubseteq$ is antisymmetric, we would have $ab_i^\theta = ab_j^\theta$ for $i \neq j$.  Therefore, no such $\theta$ can exist.
\qed\end{proof}

\begin{lemma}
\label{lem:ws}
For all $n < \omega$, $\exists$ has a winning strategy for $\Gamma_n(\s_n)$
\end{lemma}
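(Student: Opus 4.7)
The plan is to construct an explicit winning strategy for $\exists$ and verify by induction on the move number $k \leq n$ that the network $\N_k$ she produces is consistent. The strategy exploits the fact that $\s_n$'s cyclic indexing uses $N = 2n+1$ indices, while in $n$ moves $\forall$ can force only $O(n)$ of those indices to appear in the network. With $N$ strictly exceeding this bound, $\exists$ always has a fresh index available to assign to any newly introduced node, and she uses this freshness to stop $\forall$ from closing any refinement cycle.

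Concretely, $\exists$ maintains the invariant that every node of $\N_k$ carries a ``type'' (one of $d$, $r$, $m_i$, or $\varepsilon_i$) determined by the domain-range element in its $\top$-loop, that node types are compatible with the incident edge labels, and that fewer than $N$ distinct indices appear among labels of $\N_k$. For the witness, domain, and range moves she returns a fresh node $x_+$ and assigns it a type matching the edge label $\forall$ has chosen, using a fresh index whenever a new $m_i$ or $\varepsilon_i$ is called for. The composition-domain and domain-range moves only force self-loop labels already compatible with the preassigned types, so they preserve consistency directly. For the composition move she takes the $\top$-option whenever adding $a * b$ to $\top(x,z)$ is safe (in particular whenever $a * b \notin \bot(x,z)$); otherwise she takes the $\bot$-option with a fresh node $w = x_+$ carrying a fresh index. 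The initialisation move is handled by a direct case check on $\forall$'s chosen pair.

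The main obstacle is justifying the composition move's $\bot$-option. Here $\exists$ commits $a \in \top(x,w)$ and $\Dom(b) \in \bot(w,w)$, and it must be ruled out that some later forced move puts $\Dom(b)$ into $\top(w,w)$. This follows from $w$'s fresh index: any subsequent composition-domain or domain-range move acting on $w$ uses labels carrying $w$'s own fresh index, so the domain-range element it forces onto $\top(w,w)$ differs from $\Dom(b)$. A symmetric argument rules out a conflict arising from the initialisation $\bot$-label on the $\N_{nref}$ edge. Completing the proof is a move-by-move case analysis verifying the invariant, using the composition table of $\s_n$ to check that each $\exists$-response respects it. Once the invariant is maintained for all $k \leq n$, the lemma follows, since consistency is exactly the negation of $\forall$'s winning condition.
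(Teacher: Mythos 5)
Your overall shape (explicit strategy, consistency invariant, induction on moves) matches the paper, but the mechanism you build the proof around is not the one that makes $\s_n$ work, and one of your key moves is unsound. The indices appearing in the network are not something $\exists$ gets to choose: every label is a specific element of $\s_n$, fixed either by $\forall$'s choice of decomposition (witness moves) or by the composition table of $\s_n$ (closure under composition and composition-domain moves). There is no sense in which $\exists$ can ``assign a fresh index'' to a new node; if $\forall$ demands a witness for $ab_i = acd_{i-1} * cdb_{i-1}$, the new node's incident labels carry index $i-1$, full stop. The real threat is not a geometric cycle in the network but \emph{index creep on a single edge}: once $ab_i \in \top(x,y)$, a witness move plus $\exists$'s forced closure properties ($a_i \Rightarrow acd_i$, $b_i \Rightarrow cdb_i$) and compositional closure ($acd_i * cdb_i = ab_{i+1}$) push $ab_{i+1}$ onto the same edge. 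The paper's proof isolates exactly this: the only reachable inconsistency is $ab_j$ landing in both $\top(x,y)$ and $\bot(x,y)$, which requires the initialisation pair to be of the form $ab_0 \neq ab_{n+1}$, and since each move raises the maximal index on that edge by at most one, the clash cannot occur before move $n+1$. That counting argument --- one increment per move against a gap of $n+1$ in a cycle of length $N = 2n+1$ --- is the heart of the lemma, and your invariant (``fewer than $N$ distinct indices appear among the labels'') does not engage with it.

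Your handling of the composition move is also wrong in a way that loses the game outright. In the $\bot$-option $\exists$ is forced to put $a$ itself into $\top(x,w)$ and $\Dom(b)$ into $\bot(w,w)$; she does not get to relabel that edge with a fresh-index element. The subsequent domain-range move on $a \in \top(x,w)$ then forces $\Ran(a) \in \top(w,w)$, and for every nonzero composition in $\s_n$'s table one has $\Ran(a) = \Dom(b)$, so $\Dom(b)$ lands in both $\top(w,w)$ and $\bot(w,w)$ within one forced move. Your stated justification --- that moves acting on $w$ ``use labels carrying $w$'s own fresh index'' --- fails precisely because the label on $(x,w)$ is $a$, whose index is dictated by $\forall$. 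The paper's strategy avoids this trap by \emph{always} taking the $\top$-option on composition moves and absorbing the resulting $ab_{i+1}$ into the edge label, accepting the slow index creep and winning on the count. To repair your proof you would need to discard the fresh-index and $\bot$-option components and replace them with the edge-label monotonicity invariant $\top_{k+1}(x,y) \subseteq \top_k(x,y) \cup \{ab_{i+1} \mid ab_i \in \top_k(x,y)\}$ together with the explicit analysis of which initialisation pairs can ever become inconsistent.
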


\begin{proof}
First see how $\exists$ may play in a way that she returns a network that is closed under composition, composition domain and domain-range moves. For composition moves, she always chooses to add $a*b$ to the label, rather than adding a node with $D(b)$ in its $\bot$ label. Furthermore, she may set the $\top$ label in a way that for all $(x,y)$
$$\top_{k+1} \subseteq \top_{k}(x,y) \cup \{ ab_{i+1} \mid ab_i \in \top_k(x,y) \}$$
where $+$ is modulo $N$ and
$$\begin{array}{cc}
    a_i \in \top_k(x,y) \Rightarrow acd_i \in \top_k(x,y)\;\; &\;\; m_i \in \top_k(x,y) \Rightarrow cd_i \in \top_k(x,y)\\
    b_i \in \top_k(x,y) \Rightarrow cdb_i \in \top_k(x,y)\;\; &\;\; 0 \notin \top(x,y)
\end{array}$$
as well as ensure that domain-range elements are only added to reflexive edge $\top$ labels. If $m_i \in \top(x,x)$, there exists at most one $y$ such that $c_i \in \top(x,y) \vee d_i \in \top(y,x)$ and if $cd_i \in \top(x,x')$, the $y$ must be the same for $x, x'$. to prevent compositional closure from adding $m_i$ to a $\top(z,w), z \neq w$.

In the base case, observe how for every $s \neq t$, it is possible to find either $s$ or $t$ to put in $\top(x,y)$ of the initialisation network. Without loss, if $s = 0$ or $s=a_i, t=acd_i$ or $s=m_i, t = cd_i$ or $s=b_i, t=cdb_i$ she has to play $t$. Otherwise, she is free to play either $s$ or $t$, making sure that she plays the reflexive network if and only if she opts to play a domain-range element.

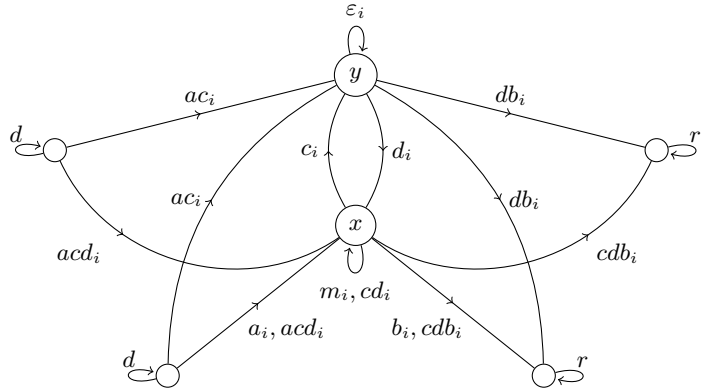
\begin{figure}
    \centering
        \begin{tikzpicture}
    \coordinate [draw, circle](d1) at (1.5,0);
    \coordinate [draw, circle](d3) at (0,3);
    \coordinate [draw, circle](r) at (6.5,0);
    \coordinate [draw, circle](r3) at (8,3);
    \node [draw, circle](ml) at (4,2) {$x$};
    \node [draw, circle](e) at (4,4) {$y$};
    
    \path (d1)edge[loop left] node[above]{$d$}(d1);
    \path (d3)edge[loop left] node[above]{$d$}(d3);
    \path (r)edge[loop right] node[above]{$r$}(r);
    \path (r3)edge[loop right] node[above]{$r$}(r3);
    \path (ml)edge[loop below] node[below]{$m_i, cd_i$}(ml);
    \path (e)edge[loop above] node[above]{$\varepsilon_i$}(e);
    
    \path(d1)edge[->-]node[right, pos=0.3]{$\;\;a_i, acd_i$}(ml);
    \path(d1)edge[->-, bend left = 30]node[left]{$ac_i$}(e);
    \path(d3)edge[->--, bend right=50]node[below left, pos=0.3]{$acd_i\;\;$}(ml);
    \path(d3)edge[->-]node[above]{$ac_i$}(e);
    \path(ml)edge[->-, bend left = 30]node[left]{$c_i$}(e);
    \path(e)edge[->-, bend left = 30]node[right]{$d_i$}(ml);
    \path(ml)edge[->-]node[left, pos=0.7]{$b_i, cdb_i\;\;$}(r);
    \path(e)edge[->-, bend left = 30]node[right]{$db_i$}(r);
    \path(ml)edge[-->-, bend right=50]node[below right, pos=0.7]{$cdb_i$}(r3);
    \path(e)edge[->-]node[above]{$db_i$}(r3);
    \end{tikzpicture}
    \caption{Compositions with $cd_i$}
    \label{fig:cd}
\end{figure}

In the induction case, as the network is closed under domain-range, composition and composition-domain moves, $\forall$'s only non-redundant move options are composition, domain, and range.

For the domain move, $\exists$ may add a new node unless $c_i$ is requested on some $x$. In that case, she must pick to add $c_i$ to $\top(x,y)$ to the designated $y$ if such $y$ exists, otherwise create such a $y$ and close it under all the necessary moves to maintain the induction hypothesis. All the compositions resulting in $cd_i, acd_i, cdb_i$ are included in the appropriate labels due to her strategy, see Figure~\ref{fig:cd}. Otherwise the move can be satisfied with a new node, satisfying all the properties in $\exists$'s strategy. Similarly, the argument can be constructed for range moves including $d_i$ or otherwise.

In case a witness move is called and the left operand is $c_i$ or the right operand is $d_i$ (or both), the witness node returned must be the designated $y$ and the induction hypothesis is maintained (again, see Figure~\ref{fig:cd}). If the witness move has $cd_i$ as an operand, she makes sure to designate the appropriate $y$, again preserving the induction hypothesis. All other non-redundant operations result in $ab_i$. If the index of the operands is $i-1$, $\exists$ may ensure she does not include $a_{i-1}*b_{i-1}$ witness (see Figure~\ref{fig:wit} left). Finally, for operands with index $i$ she adds a witness node with $m_i, cd_i$ in the reflexive label, $a_i, acd_i$ on the left and $b_i, cdb_i$ on the right (see Figure~\ref{fig:wit} right). This covers all the possible non-redundant witness moves, but results in $ab_{i+1}$ being added to the label. In any case, the induction hypothesis is maintained.

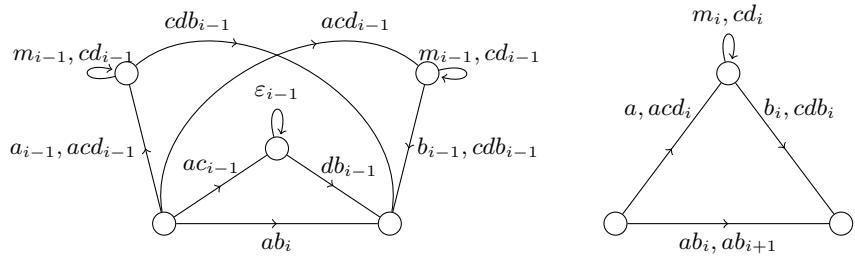
\begin{figure}
    \centering
    \begin{tikzpicture}
        \node[draw, circle] (x) at (0.5,0){};
        \node[draw, circle] (y) at (3.5,0){};
        \node[draw, circle] (z1) at (0,2){};
        \node[draw, circle] (z2) at (2,1){};
        \node[draw, circle] (z3) at (4,2){};
        
        \path (x) edge[->-] node[below]{$ab_i$} (y);
        \path (x) edge[->-] node[left]{$a_{i-1}, acd_{i-1}$} (z1);
        \path (z1) edge[->--, bend left=70] node[above, pos=0.2]{$cdb_{i-1}$} (y);
        \path (x) edge[-->-, bend left=70] node[above, pos=0.8]{$acd_{i-1}$} (z3);
        \path (z3) edge[->-] node[right]{$b_{i-1}, cdb_{i-1}$} (y);
        \path (x) edge[->-] node[left, pos=0.8]{$ac_{i-1}$} (z2);
        \path (z2) edge[->-] node[right, pos=0.2]{$\;db_{i-1}$} (y);
        
        \path (z2) edge[loop above] node[above]{$\varepsilon_{i-1}$} (z2);
        \path (z1) edge[loop left] node[above]{$m_{i-1}, cd_{i-1}\;\;\;\;$} (z1);
        \path (z3) edge[loop right] node[above]{$\;\;\;\;m_{i-1}, cd_{i-1}$} (z3);
        
        \node[draw, circle] (x) at (6.5,0){};
        \node[draw, circle] (y) at (9.5,0){};
        \node[draw, circle] (z2) at (8,2){};
        
        \path (x) edge[->-] node[below]{$ab_i, ab_{i+1}$} (y);
        \path (x) edge[->-] node[left, pos=0.8]{$a, acd_{i}$} (z2);
        \path (z2) edge[->-] node[right, pos=0.2]{$b_i, cdb_{i}$} (y);
        \path (z2) edge[loop above] node[above]{$m_{i}, cd_i$} (z2);
    \end{tikzpicture}
    \caption{Witness moves for $ab_i$ with $i-1$ left and $i$ on the right}
    \label{fig:wit}
\end{figure}

We have now seen that $\exists$ can play a game in a way that the only possible inconsistency that can arise is from $ab_i \in \top(x,y)$ and also in $\bot(x,y)$. Without loss, this situation can only arise when the initialisation pair is $ab_0, ab_{n+1}$. In this case she plays the initial non-reflexive network with $ab_0 \in \top(x,y), ab_{n+1} \in \bot(x,y)$. As she can only increase the maximal $i$ such that $ab_i \in \top(x,y)$ by 1 each move, she introduces an inconsistency at the $n+1$st move at the earliest. Thus she can win $\Gamma_n(\s_n)$.
\qed\end{proof}
This gives us all we need to conclude

\begin{theorem}
$R(\Dom, \Ran, *)$ cannot be axiomatised by a finite first order theory.
\end{theorem}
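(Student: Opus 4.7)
The plan is a standard compactness argument, assembling the three preceding lemmas. Suppose for contradiction that $R(\Dom,\Ran,*)$ is axiomatised by some finite first order theory $T$. Lemma~\ref{lem:sigma} gives us the recursively enumerable axiomatisation $\Sigma = \{\sigma_i \mid i < \omega\}$, so $T$ and $\Sigma$ have the same models, in particular $\Sigma \models T$. By compactness of first order logic, there is a finite subset $\Sigma_0 \subseteq \Sigma$ with $\Sigma_0 \models T$.

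Next I would choose $N < \omega$ strictly larger than every index appearing in $\Sigma_0$, and consider the structure $\s_N$ defined in this section. The key observation is that $\exists$'s winning strategy for $\Gamma_N(\s_N)$ given by Lemma~\ref{lem:ws} is also a winning strategy for $\Gamma_m(\s_N)$ for any $m \leq N$ (simply play the first $m$ moves of the longer strategy), so by Lemma~\ref{lem:sigma} we have $\s_N \models \sigma_m$ for every $m \leq N$, and in particular $\s_N \models \Sigma_0$.

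Consequently $\s_N \models T$, which would place $\s_N$ in $R(\Dom,\Ran,*)$. This directly contradicts Lemma~\ref{lem:nonrep}, which asserts that $\s_N$ is not representable. The contradiction shows that no finite axiomatisation of $R(\Dom,\Ran,*)$ can exist.

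There is essentially no obstacle left at this stage: all the substantive work has been done in the three preceding lemmas. The only subtlety worth stating explicitly in the write-up is why a winning strategy for $\Gamma_N(\s_N)$ yields winning strategies for all shorter games $\Gamma_m(\s_N)$; this is immediate because Abelard scheduling fewer moves only restricts his options, and inconsistency of a network is inherited by extensions so if $\exists$ survives $N$ moves she has in particular survived $m \leq N$ of them.
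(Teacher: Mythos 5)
Your proof is correct and is essentially the paper's argument: both rest on the same three lemmas and a compactness step, the only (cosmetic) difference being that you apply compactness to $\Sigma \models T$ to extract a finite $\Sigma_0$, whereas the paper shows every finite subset of $\Sigma \cup \{\neg\psi\}$ is satisfied by some $\s_n$ and derives consistency of the whole theory. Your explicit remark that a winning strategy for $\Gamma_N(\s_N)$ restricts to one for $\Gamma_m(\s_N)$, $m \leq N$, is a point the paper also uses implicitly when asserting $\s_n \models \sigma_i$ for $i \leq n$.
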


\begin{proof}
Suppose such a theory existed, call it $\Psi$. Then $R(\Dom, \Ran, *)$ is axiomatised by a single axiom $\psi = \bigwedge_{\psi' \in \Psi} \psi'$. Thus $\Sigma \cup \{\neg \psi\}$ is not consistent as, by Lemma~\ref{lem:sigma}, $\Sigma$ ensures that any model of it is representable and $\neg \psi$ ensures it is not. Now look at any finite subtheory $\Omega \subseteq \Sigma \cup \{\neg \psi\}$. Observe how, since it is finite, there exists $n < \omega$ such that for all $m>n$ we have $\sigma_m \not\in \Omega$. Thus $\s_n \models \Omega$ as by Lemmas~\ref{lem:ws},~\ref{lem:sigma} we have $\s_n \models \sigma_i, i\leq n$, and by Lemmas~\ref{lem:winStrat},~\ref{lem:nonrep} we have $\s_n \models \neg\psi$. By compactness of first order logic, we conclude the Theory $\Sigma \cup \{\neg \psi\}$ is consistent and we have reached a contradiction.
\qed\end{proof}

\section{Finite Representation Property}
\label{sec:frp}
We have now seen that both the angelic and demonic representable domain-range semigroups cannot be axiomatised finitely. However, it remains unknown if all finite members of $R(\Dom, \Ran, ;)$ and $R(\Dom, \Ran, *)$ have the finite representation property. Although the finite axiomatisability (or lack thereof) is known for a number of representation classes \cite{mikulas2004axiomatizability}, FRP remains largely unknown for signatures with composition. In this section we discuss some existing results and extend FRP result for ordered domain algebras \cite{hirsch2013meet}.

The known results regarding FRP are summarised in Table~\ref{tab:frp}. The signatures $\{;\}, \{1',;\}, \{\Dom, *\}$ are well known examples where Cayley representation for groups can be used to represent the structure over a finite base. Neuzerling shows that any signature containing meet and composition fails to have FRP using Point Algebra \cite{neuzerling2016undecidability}. In \cite{hirsch2020finite} we show that this structure can also be used to show that FRP fails for any signature containing negation, partial order and composition. In a forthcoming paper, we extend this result to any signature containing $\{-, ;\}$.

\begin{table}
    \caption{Signatures with composition where FRP is known}
    \centering
    \begin{tabular}{|p{5cm}|p{5cm}|}
        \hline
        \textbf{FRP} & \textbf{No FRP} \\\hline
        $\{;\}, \{1', ;\}, \{\Dom,*\}$ & $\{\cdot, ;\} \subseteq \tau$ \cite{neuzerling2016undecidability} \\
        $\{\leq, ;\}, \{\sqsubseteq, *\}$ \cite{zareckiui1959representation} & $\{-, ;\} \subseteq \tau$ \\
        $\{0,1,\Dom, \Ran, \leq, 1', \smile, ;\}$ \cite{hirsch2013meet} &\\
        $\{\sqsubseteq, ;\}$ \cite{hirsch2020finite} & \\
        $\{\leq, \setminus, / , ;\}$ \cite{rogozin2020finite} & \\\hline
    \end{tabular}
    \label{tab:frp}
\end{table}

A simple approach to constructing a finite representation of a relational partially ordered semigroup was proposed by Zarecki{\u \i} in \cite{zareckiui1959representation} where one may amend a representable $\{\leq, ;\}$-structure $\s$ with a compositional identity element $e$ and only add the mandatory $(e,e)$ to $\leq$ to then define a simple representation $\theta$ over the base $\s$ with
$$(s,t) \in a^\theta \Longleftrightarrow t \leq s;a$$
The inclusion of $e$ ensures faithfulness as for $a \not \leq b$ $(e,a) \in a^\theta \setminus b^\theta$ and the associativity and monotonicity ensure that $\leq, ;$ are correctly represented.

Egrot and Hirsch \cite{hirsch2013meet} amend the idea to represent the ordered domain algebras,  the signature $\{0, \Dom, \Ran, \leq, 1', \smile, ;\}$ where $0$ is the empty relation (bottom element of the Boolean lattice), $1'$ is the relational identity and $\smile$ is the relational converse. They represent the structures in $R(0, \Dom, \Ran, \leq, 1', \smile, ;)$ over the base of subsets of the structure, rather than its elements.

However, their result can be adapted for a wider range of signatures. Below we present an outline of the proof for the following theorem.

\begin{proposition}
\label{prop:simclass}
For any signature $\{\Dom, \Ran, \smile, ;\} \subseteq \tau \subseteq \{ 0, 1, \Dom, \Ran,\leq, 1', \smile, ;\}$, $R(\tau)$ has the finite representation property.
\end{proposition}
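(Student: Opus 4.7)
The approach is to adapt the construction of Egrot and Hirsch \cite{hirsch2013meet}, observing that the representation they build for ordered domain algebras essentially depends only on $\{\Dom, \Ran, \smile, ;\}$, while the additional signature items $\leq, 0, 1, 1'$, when present, are preserved automatically by the same construction.

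Given a finite $\s \in R(\tau)$, I first fix any ambient representation $\theta_0 : \s \to \wp(Y \times Y)$ of $\s$ as a $\tau$-structure; this exists by hypothesis but may have infinite base. The map $\theta_0$ induces an auxiliary partial order on $\s$, namely set inclusion of the images $a^{\theta_0} \subseteq b^{\theta_0}$; if $\leq \in \tau$ this coincides with the given order, and if $0, 1, 1' \in \tau$ they correspond under $\theta_0$ to $\emptyset$, $Y \times Y$, and $\mathrm{id}_Y$ respectively. This auxiliary order is needed to drive the construction, even when it is not part of the target signature.

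Next, I build a finite base $X$ and representation $\theta : \s \to \wp(X \times X)$ by adapting the Egrot-Hirsch recipe. Concretely, $X$ consists of certain subsets of $\s$ -- for example, principal down-sets generated by domain-range elements, or analogously defined ``points'' -- and is finite because $\s$ is. For each $a \in \s$, $a^\theta$ is defined by an algebraic formula on elements of $\s$, involving only $\Dom, \Ran, \smile, ;$ and the auxiliary order. This generalises the Zareckiĭ recipe $(s,t) \in a^\theta \iff t \leq s; a$ recalled at the start of the section.

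I then need to check that $\theta$ is a $\tau$-representation. Preservation of $\Dom, \Ran, \smile, ;$ follows as in \cite{hirsch2013meet}, with routine adaptation to the new base. Preservation of the remaining operations, when present, is immediate from the canonical interpretation of $\leq$ as $\subseteq$, of $0$ as $\emptyset$, of $1$ as $X \times X$, and of $1'$ as $\mathrm{id}_X$. The main obstacle will be faithfulness ($a \neq b \Rightarrow a^\theta \neq b^\theta$) in the case $\leq \notin \tau$, since the Egrot-Hirsch separation argument uses $\leq$ as a first-class tool. I expect to handle this by transport from $\theta_0$: if $a \neq b$ then, without loss, $a^{\theta_0} \not\subseteq b^{\theta_0}$, and this failure must be convertible, via the algebraic definition of $\theta$, into an explicit point of $a^\theta \setminus b^\theta$ in $X \times X$.
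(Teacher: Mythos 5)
Your proposal follows essentially the same route as the paper: reuse the Egrot--Hirsch construction over a finite base of subsets of $\s$, recover the missing order (and the constant $0$) semantically from representability even when they are absent from $\tau$, and use that order to produce a separating pair witnessing faithfulness. The only cosmetic differences are that the paper defines $s \leq t$ by inclusion under \emph{all} representations rather than one fixed $\theta_0$ (either works here), and its base consists of upward-closed sets $S$ satisfying $(\Dom(S);S;\Ran(S))^\uparrow = S$ rather than down-sets, with $(S,T) \in a^\rho$ iff $S;a \subseteq T$ and $T;\breve{a} \subseteq S$, which is the two-sided, converse-using form of the Zarecki{\u \i} recipe you allude to.
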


\begin{proof}
We can, for any representable $\tau$-structure $\s$, define a partial ordering $\leq$ (even if $\leq\; \notin \tau$) as the set of all pairs where $s \leq t$ if and only if for all representations $\theta$, $s^\theta \leq t^\theta$. Similarly, one can define at most one element $0$ (again even if $0 \notin \tau$) that will always be represented as an empty relation.

This means that we can define the set of closed sets $\G$ as the set of all $\emptyset \subsetneq S \subseteq \s \setminus \{0\}$ such that for $\Dom(S) = \prod_{s \in S} \Dom(s)$ and similarly $\Ran(S)$, we have $(\Dom(S);S;\Ran(S))^\uparrow = S$ where $\uparrow$ is upward closure with respect to $\leq$. Then define a mapping $\rho: \s \rightarrow \wp(\G \times \G)$ such that $(S,T) \in a^\rho$ if and only if $S;a \subseteq T$ and $T;\Breve{a} \subseteq S$.

The mapping is faithful as for $a \nleq b$, $(\Dom(a), a) \in a^\rho$ as $a;\Breve{a} \geq \Dom(a)$, but not in $b$ as that would mean $a \leq \Dom(a);b \leq b$. It represents $\leq$ correctly by monotonicity of $;$ over $\leq$ and $0, 1$ correctly as $1$ is the top element with respect to ordering and $a;0 = 0;a = 0$, for all $a$. Domain and range are correctly represented as if there is an outgoing/incoming edge from $S$ with $a$/$\Breve{a}$, then $S;a;\Breve{a} \subseteq S$ and since $\Ran(\Breve{a}) = \Ran(a;\Breve{a}) = \Dom(a)$, $S;\Dom(a) \subseteq S$ and thus $\Dom(a) = \Ran(\Breve{a})$ is included in $(S,S)$. Furthermore if $\Ran(a) = \Dom(\Breve{a})$ is included in $(S,S)$ then $(S;\Breve{a})^\uparrow$ ensures that there is an incoming edge with $a$ and an outgoing edge with $\Breve{a}$. Finally, domain elements are only on reflexive nodes as if $(S;\Dom(a))^\uparrow = S$ so if $(S,T) \in \Dom(a)$ then $S \subseteq T \subseteq S$ and similarly $(S,T) \in (1')^\rho$ if and only $S = T$. Converse is correctly represented as $\Breve{\breve{a}} = a$. Finally $a^\rho;b^\rho \leq (a;b)^\rho$ by monotonicity and $(a;b)^\smile = \breve{a};\breve{b}$ and $(a;b)^\rho \leq a^\rho;b^\rho$ as if $(S,T) \in (a;b)^\rho$, $\Big(S;a;\Dom(\breve{a};\breve{T}) \cup T;a;\Ran(S;a)\Big)^\uparrow$ is an appropriate witness for the composition.
\qed\end{proof} 
Note that the second part of the proof where we show that $\rho$ is indeed a representation is an outline. This is because the argument closely follows that in \cite[Section 6]{hirsch2013meet}, refer to it for more detail.

Finally, Rogozin shows that one can embed residuated semigroups into relational quantales in \cite{rogozin2020finite} and we show in \cite{hirsch2020finite} that a Zarecki{\u \i} representation can be modified in a way to represent semigroups with demonic refinement. The latter was the first example of a signature with composition without a finitely axiomatisable representation class, but with FRP.

\section{Problems}
In this section we look at some open problems and outline the difficulties with showing the finite representation property.

We begin with the observation that $e$ in the Zarecki{\u \i} representation, as defined in Section~\ref{sec:frp}, is not represented as the true relational identity element, i.e. $1' = \{(x,x) \mid x \in X\}$, as for some $a \lneq a'$ we will have $(a', a) \in e^\theta$. Thus this good behaviour does not extend to the signature of $\{1', \leq, ;\}$, with $R(\leq, 1', ;)$ non-finitely axiomatisable \cite{hirsch2011axiomatizability} and FRP unknown.

$R(\leq, 1', ;)$ suffers from the same problem as $R(\Dom, \Ran, *)$ and $R(\Dom, \Ran, ;)$. That is, some elements are always represented as \emph{partial functions}, that is, for any representation $\theta$ over $X$, if $(x,y) \in f^\theta, (x,z) \in f^\theta$ then $y = z$. Simple examples of that include the domain-range elements, as well as those $f \leq 1'$. However, composition makes for some more interesting examples, like $c_i$ in $\s_n$ in Section~\ref{sec:nfa} or in $R(\Dom, \Ran, ;)$, $\Ran(a);b$ will always be represented as a partial function if $\Dom(a;b) = a;b$. This is illustrated in Figure~\ref{fig:pf}, from left to right, observe how for any representation $\theta$ if $(x,y) \in (R(a);b)^\theta$ then $(x,x) \in R(a)^\theta$, so there must exist a $z$ such that $(z,x) \in a^\theta$. As $a;b = \Dom(a;b)$ and by composition, $z$ must be the same as $y$. Similarly, for any outgoing $z$ with $(x,z) \in (R(a);b)^\theta$, it has to be the case that $y=z$. 

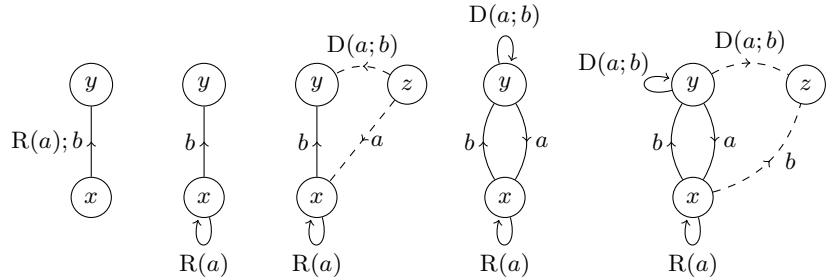
\begin{figure}
    \centering
    \begin{tikzpicture}
        \node[circle, draw](x1) at (0,0) {$x$};  
        \node[circle, draw](y1) at (0,1.5){$y$};
        \node[circle, draw](x2) at (1.5,0) {$x$};  
        \node[circle, draw](y2) at (1.5,1.5){$y$};
        \node[circle, draw](x3) at (3,0) {$x$};  
        \node[circle, draw](y3) at (3,1.5){$y$};
        \node[circle, draw](z3) at (4.2,1.5){$z$};
        \node[circle, draw](x4) at (5.5,0) {$x$};  
        \node[circle, draw](y4) at (5.5,1.5){$y$};
        \node[circle, draw](z5) at (9.5,1.5){$z$};
        \node[circle, draw](x5) at (8,0) {$x$};  
        \node[circle, draw](y5) at (8,1.5){$y$};
        
        \path (x1) edge[->-] node[left]{$\Ran(a);b$} (y1);
        
        \path (x2) edge[->-] node[left]{$b$} (y2);
        \path (x2) edge[loop below] node[below]{$\Ran(a)$} (x2);
        
        \path (x3) edge[->-] node[left]{$b$} (y3);
        \path (x3) edge[loop below] node[below]{$\Ran(a)$} (x3);
        \path (z3) edge[->-, dashed] node[right]{$a$} (x3);
        \path (z3) edge[->-, dashed, bend right] node[above]{$\Dom(a;b)$} (y3);
        
        \path (x4) edge[loop below] node[below]{$\Ran(a)$} (x4);
        \path (x4) edge[->-, bend left] node[left]{$b$} (y4);
        \path (y4) edge[->-, bend left] node[right]{$a$} (x4);
        \path (y4) edge[loop above] node[above]{$\Dom(a;b)$} (y4);
        
        \path (x5) edge[loop below] node[below]{$\Ran(a)$} (x5);
        \path (x5) edge[->-, bend left] node[left]{$b$} (y5);
        \path (y5) edge[->-, bend left] node[right]{$a$} (x5);
        \path (y5) edge[loop left] node[above]{$\Dom(a;b)\;\;\;\;\;\;\;\;\;$} (y5);
        \path (y5) edge[->-, dashed, bend left] node[above]{$\Dom(a;b)$}(z5);
        \path (x5) edge[->-, dashed, bend right] node[right]{$\;b$}(z5);
    \end{tikzpicture}
    \caption{Partial-Functional Nature of $R(a);b$ when $a;b = \Dom(a;b)$}
    \label{fig:pf}
\end{figure}

Every function in the signature of domain-range algebras comes with a converse. More specifically, if $D(a;b) = a;b$ then not only is $R(a);b$ a function, but $a;D(b)$ is its well defined converse. Unfortunately, this does not enable us to use represent structures over a finite base in the same way as the structures in Proposition~\ref{prop:simclass}.

It is true that partial functions, their converses and arbitrary compositions of those have their converse well defined. But take an $a$ with its converse defined and say $a = b;c$ and $\Ran(b) = \Dom(c)$. Observe that converses of $b,c$ not defined. Both $b$ and $c$ have a \emph{partial converse}. That is, for every representation $\theta$, $(b^\theta)^\smile \leq (c;\breve{a})^\theta$ and $(c^\theta)^\smile \leq (\breve{a};b)^\theta$, but the $\geq$ inclusions do not necessarily hold, see Figure~\ref{fig:pconv}. 

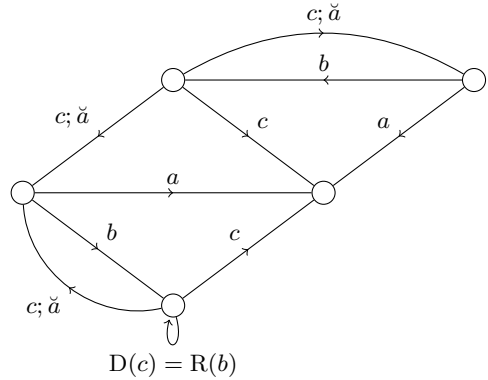
\begin{figure}[ht]
    \centering
    \begin{tikzpicture}
        \coordinate[draw, circle] (x) at(0,0);
        \coordinate[draw, circle] (y) at(4,0);
        \coordinate[draw, circle] (z) at(2,1.5);
        \coordinate[draw, circle] (z1) at(2,-1.5);
        \coordinate[draw, circle] (w) at(6,1.5);
        
        \path(x) edge[->-] node[above right]{$b$}(z1);
        \path(z1) edge[->-] node[above left]{$c$}(y);
        \path(z1) edge[loop below] node[below]{$\Dom(c) = \Ran(b)$} (z1);
        \path (z1) edge[->-, bend left=50] node[below left]{$c;\breve{a}$} (x);

        \path(x) edge[->-] node[above]{$a$}(y);
        \path(w) edge[->-] node[above]{$b$}(z);
        \path(z) edge[->-, bend left] node[above]{$c;\breve{a}$}(w);
        \path(z) edge[->-] node[above right]{$c$}(y);
        \path (z) edge[->-] node[above left]{$c;\breve{a}$} (x);
        \path (w) edge[->-] node[above left]{$a$} (y);
    \end{tikzpicture}
    \caption{Partial converse of $b$, i.e. $b \leq c;\breve{a}$, but $\breve{c};a \not \leq b$, where $a,b,c$ are elements of a domain range semigroup}
    \label{fig:pconv}
\end{figure}

This enables us to define the partial converse of $s \in \s$ to be the set $\C(s) \subseteq \s$ where $\C(s)$ is the set of all $s' \in \s$ such that $(s^\theta)^\smile \leq (s')^\theta$, for any representation $\theta$. However, as we have seen there is no guarantee that $\C(\C(s))=s^\uparrow$. Furthermore, $\C(t);\C(s) \subseteq \C(s;t)$ but not necessarily $\C(t);\C(s) \supseteq \C(s;t)$. As the proof of FRP for ordered domain algebras heavily relies on both $\breve{\breve{a}} = a$ and $(a;b)^\smile = \breve{b};\breve{a}$, the same representation cannot be used for converse-free signatures.

Adding join ($+$) to the signature adds additional difficulty. The class of representable join-lattice semigroups $R(+, ;)$ was shown non-finitely axiomatisable in \cite{andreka1988representation}, with the finite representation property remaining open. Similar to the case where $1'$ is added to the signature of $\{\leq, ;\}$, this slight modification completely breaks the Zarecki{\u \i} representation. That is because $+$ is not necessarily \emph{distributive}, i.e. if $a \leq b+c$ there exists some $b' \leq b$ and $c' \leq c$ such that $a = b' + c'$.

For distributive lattices, one can define the Zarecki{\u \i} representation over the set of minimal non-$0$ elements and preserve all operations in a faithful manner. However, no signature including $\{+, ;\}$ has been shown to have the finite representation property for its representation class thus far.

The problems raised in this section can be summarised below

\begin{problem}
Do converse-free (ordered) domain-range semigroups have the finite representation property? How about their demonic counterparts?
\end{problem}

\begin{problem}
Do signatures containing the join-semilattice and composition have the finite representation property?
\end{problem}

\begin{problem}
Does $R(\leq, 1', ;)$ have FRP? How about $R(\leq, 1', \smile, ;)$ or $R(\leq, \smile, ;)$?
\end{problem}
%
%
%
% ---- Bibliography ----
%
% BibTeX users should specify bibliography style 'splncs04'.
% References will then be sorted and formatted in the correct style.
%
\bibliographystyle{splncs04}
\bibliography{ref}

\begin{thebibliography}{10}
\providecommand{\url}[1]{\texttt{#1}}
\providecommand{\urlprefix}{URL }
\providecommand{\doi}[1]{https://doi.org/#1}

\bibitem{andreka1988representation}
Andr{\'e}ka, H.: On the representation problem of distributive
  semilattice-ordered semigroups. preprint, Mathematical Institute of the
  Hungarian Academy of Sciences p.~174 (1988)

\bibitem{bredihin1978representations}
Bredihin, D.A., Schein, B.M.: Representations of ordered semigroups and
  lattices by binary relations. In: Colloquium Mathematicum. vol.~39, pp.
  1--12. Institute of Mathematics Polish Academy of Sciences (1978)

\bibitem{desharnais2009domain}
Desharnais, J., Jipsen, P., Struth, G.: Domain and antidomain semigroups. In:
  International Conference on Relational Methods in Computer Science. pp.
  73--87. Springer (2009)

\bibitem{dijkstra2012predicate}
Dijkstra, E.W., Scholten, C.S.: Predicate calculus and program semantics.
  Springer Science \& Business Media (2012)

\bibitem{hirsch2013meet}
Hirsch, R., Egrot, R.: Meet-completions and representations of ordered domain
  algebras. The Journal of Symbolic Logic  (2013)

\bibitem{hirsch2002relation}
Hirsch, R., Hodkinson, I.: Relation algebras by games. Elsevier (2002)

\bibitem{hirsch2011axiomatizability}
Hirsch, R., Mikul{\'a}s, S.: Axiomatizability of representable domain algebras.
  The Journal of Logic and Algebraic Programming  \textbf{80}(2),  75--91
  (2011)

\bibitem{hirsch2020algebra}
Hirsch, R., Mikul{\'a}s, S., Stokes, T.: The algebra of non-deterministic
  programs: demonic operators, orders and axioms. arXiv preprint
  arXiv:2009.12081  (2020)

\bibitem{hirsch2021demonic}
Hirsch, R., {\v S}emrl, J.: Demonic lattices and semilattices in relational
  semigroups with ordinary composition. In: Proceedings of the 36th Annual
  Symposium on Logic in Computer Science. LICS, IEEE, Rome, Italy (2021)

\bibitem{hirsch2020finite}
Hirsch, R., {\v{S}}emrl, J.: Finite representability of semigroups with demonic
  refinement. Algebra universalis  \textbf{82}(2),  1--14 (2021)

\bibitem{hirsch2021axioms}
Hirsch, R., Stokes, T.: Axioms for signatures with domain and demonic
  composition. Algebra universalis  \textbf{82}(2),  1--19 (2021)

\bibitem{jackson2019domain}
Jackson, M., Mikul{\'a}s, S.: Domain and range for angelic and demonic
  compositions. Journal of logical and algebraic methods in programming
  \textbf{103},  62--78 (2019)

\bibitem{mikulas2004axiomatizability}
Mikul{\'a}s, S.: Axiomatizability of algebras of binary relations. In:
  Classical and New Paradigms of Computation and their Complexity Hierarchies,
  pp. 187--205. Springer (2004)

\bibitem{neuzerling2016undecidability}
Neuzerling, M.: Undecidability of representability for lattice-ordered
  semigroups and ordered complemented semigroups. Algebra universalis
  \textbf{76}(4),  431--443 (2016)

\bibitem{rogozin2020finite}
Rogozin, D.: The finite representation property for representable residuated
  semigroups. arXiv preprint arXiv:2007.13079  (2020)

\bibitem{zareckiui1959representation}
Zarecki{\u\i}, K.A.: The representation of ordered semigroups by binary
  relations. Izvestiya Vys{\v{s}}hikh. Uchebnykh. Zavedeni{\i}. Matematika
  \textbf{6}(13),  48--50 (1959)

\end{thebibliography}
\end{document}